\newcolumntype{L}[1]{>{\raggedright\let\newline\\\arraybackslash\hspace{0pt}}m{#1}}
\newcolumntype{C}[1]{>{\centering\let\newline\\\arraybackslash\hspace{0pt}}m{#1}}
\newcolumntype{R}[1]{>{\raggedleft\let\newline\\\arraybackslash\hspace{0pt}}m{#1}}
\theoremstyle{plain}
\newtheorem{lemma}{Lemma}
\theoremstyle{definition}
\newtheorem{definition}{Definition}
\newcommand\note[1]{\textcolor{red}{[#1]}}
\DeclareMathOperator*{\argmax}{arg\,max}
\newcommand{\myComment}[1]{} 
\renewcommand\footnotetextcopyrightpermission[1]{} 
\begin{document}
\title[Influence Dispersion Trees]{Go Wide, Go Deep: Quantifying the Impact of Scientific Papers through Influence Dispersion Trees}

\author{Dattatreya Mohapatra$^1$, Abhishek Maiti$^1$, Sumit Bhatia$^2$  and Tanmoy Chakraborty$^1$}
\affiliation{%
  \institution{$^1$IIIT-Delhi, India; $^2$IBM Research AI, New Delhi, India}
}
\email{{dattatreya15021,abhishek16005,tanmoy}@iiitd.ac.in, sumitbhatia@in.ibm.com}

\renewcommand{\shortauthors}{Mohapatra et al.}

\begin{abstract}
Despite a long history of use of `citation count' as a measure to assess the impact or influence of a scientific paper, the evolution of follow-up work inspired by the paper and their interactions through citation links have rarely been explored to quantify how the paper enriches the depth and breadth of a research field. We propose a novel data structure, called Influence Dispersion Tree (IDT) to model the organization of follow-up papers and their dependencies through citations. We also propose the notion of an ideal IDT for every paper and show that an ideal (highly influential) paper should increase the knowledge of a field vertically and horizontally. Upon suitably exploring the structural properties of IDT (both theoretically and empirically), we derive a suite of metrics, namely Influence Dispersion Index (IDI), Normalized Influence Divergence (NID) to quantify the influence of a paper. Our theoretical analysis shows that an ideal IDT configuration should have equal depth and breadth (and thus minimize the NID value).

We establish the superiority of NID as a better influence measure in two experimental settings. First, on a large real-world bibliographic dataset, we show that NID outperforms raw citation count as an early predictor of the number of new citations a paper will receive within a certain period after publication.  Second, we show that NID is superior to the raw citation count at identifying the papers recognized as highly influential through `Test of Time Award' among all their contemporary papers (published in the same venue). We conclude that in order to quantify the influence of a paper, along with the total citation count, one should also consider how the citing papers are organized among themselves to better understand the influence of a paper on the research field. For reproducibility, the code and datasets used in this study are being made available to the community.

\end{abstract}

\maketitle

\section{Introduction}
\label{sec:introduction}
A common consensus among the Scientometrics community is that the total number of citations received by a scientific article can be used to quantify its impact on the research field~\cite{garfield1972citation,garfield1964science}. Citation count, being a simple metric to compute and interpret, is commonly used in many decision-making processes such as faculty recruitment, fund disbursement, and tenure decisions. Many improvements over raw citation count have also been proposed by incorporating additional constraints. Examples include normalizing citation counts by the maximum citation count a paper could achieve in a particular research field~\cite{radicchi2008universality}, metrics inspired by PageRank~\cite{ding2009pagerank}, taking into
account the locations of citation mentions in the paper (e.g. Introduction, Related Work, etc.)~\cite{singh2015role}, understanding the reasons behind citations and assigning different weights to different citations based
on these reasons~\cite{chakraborty2016all}. 

While improvements over the raw citation count, these measures are fundamentally also \emph{aggregate} measures as they ignore the relationships between different (citing) papers that cite a given paper. We posit that such connections are useful and studying them can help us better understand the propagation of influence from a paper to its different citing papers.  Rather than proposing yet another variant of citation count,  we are interested in unraveling these structural connections between the set of followup papers of a given paper and understand the differentiating structural properties of influential papers.

\noindent {\bf Motivation:} We posit that the impact of a scientific paper can broadly be studied across two dimensions -- \emph{(i)} how many different research directions it gives rise to; and \emph{(ii)} how much traction these individual research directions gather in the field. In the former case, we say that the influence of the paper has \emph{breadth} and it helps in expanding the field horizontally, leading to an increase in the breadth of the field. A paper with such a broad influence may even trigger the emergence of a new sub-field. In the latter case, we say that the paper has had a \emph{deep} influence on the field with a large number of papers in a given research direction. Intuitively, \emph{highly influential papers are the ones that have a deep, and broad influence on the field}. Influence measures that are variants of the raw citation count of the paper may not offer such fine-grained understanding of the contribution of a paper to its field. Quantifying the impact of a paper in terms of its depth and breadth may also help to uncover the relationship between its different citing papers~\cite{huang2018number} and thus, understand the diffusion patterns of scientific ideas through citation links~\cite{chen2004tracing}, predict the structural virality~\cite{goel2015structural} and  citation cascade~\cite{min2017quantifying,huang2018number,abs-1806-00089}. While there have been recent efforts to study these structural properties of networks formed by a paper and its citing papers~\cite{min2017quantifying,huang2018number}, none of these studies have attempted to develop a metric to quantify the influence of a paper from its network topology. \emph{We are the first to propose a series of metrics to quantify a new facet of influence that a paper has had on its followup papers}.

\noindent {\bf Our Contributions:} Our major contributions are threefold:\\
\noindent
{\bf (i) A framework to model the depth and breadth of the influence of a paper} by a novel network structure, called the \emph{Influence Dispersion Tree (IDT)} (Section \ref{sec:approach}). The IDT of a paper $P$ is a directed tree rooted at $P$ with all its citing papers as the children. The tree is constructed such that the citing papers having citation links among themselves are grouped to represent a body of work influenced by the root paper $P$ (Section~\ref{sec:idt-construction}). These \emph{bodies} of work along with the number of papers in each group are then used to model the depth and breadth of impact of $P$.  We also present a theoretical analysis of the properties of the IDT structure and show how these properties are related to the citation count of the paper (Section~\ref{sec:idt-properties}). 

\noindent
{\bf (ii) A series of measures to quantify the influence of a scientific paper:} For a scholarly paper $P$, we propose a novel metric, called {\em Influence Dispersion Index} (IDI) derived from its IDT to quantify the contribution of the paper to its field (by increasing depth or breadth or both) through influence diffusion (Section \ref{sec:idi}). We argue that in an ideal scenario, the influence of a paper should be dispersed to maximize the depth as well as the breadth of its influence. We then derive the configuration of the IDT of such a paper and prove that such an optimal IDT configuration will have equal depth and breadth (and is equal to $\left\lceil\sqrt{n}\right\rceil$, where $n$ is the number of citations of a given paper). Next, we propose another metric, called {\em Influence Divergence} (ID) that measures how the IDI value of a paper diverges from IDI value of the optimal IDT configuration (Section~\ref{sec:nid}). A lower value of divergence indicates that the influence of the paper under consideration is dispersed in a way that is similar to that of the ideal case, and consequently, higher is the chance for the paper to be considered as a highly influential paper. We further derive a normalized version of ID, and call it {\em Normalized Information Divergence} (NID) that normalizes influence divergence values for different papers with different citation counts in the range $[0,1]$ and allows for comparing different papers based on their NID values. 

\noindent
 {\bf (iii) Empirical validation on large real-world datasets:} We use a large bibliographic dataset consisting of about $3.9$ million articles (Section \ref{sec:data}) to study the properties of the proposed IDT structure and test the effectiveness of proposed influence metrics. We construct IDTs for all the papers in the dataset and their analysis reveals several interesting observations (Section \ref{sec:general}). First, we observe that with an increase in the citation count, breadth of an IDT tends to grow much faster than the depth. The maximum value of breadth ($4,892$) is much higher than that of depth ($48$).  We infer that acquiring more citations over time often leads to an increase in the breadth instead of growth of an existing branch. Next, we find that the NID value decreases with an increase in citation count. This finding strengthens our hypothesis that the IDT of an highly influential paper tends to reach its optimal configuration by enhancing both the depth and the breadth of its research field. Third, we show that NID outperforms raw citation count as an early predictor to forecast the number of future citations a paper will receive (Section \ref{sec:future_citation_prediction}). Finally, we manually curate a set of 40 papers recognized as the most influential papers by their communities through `Test of Time' or `10 years influential paper' awards. Once again, we find that NID outperforms the raw citation count in identifying these influential papers (Section \ref{sec:tot_papers}). Most importantly, NID also provides an explanation why a paper has received such a prestigious award -- it is not only the number of followup papers (or citation count) that matters, but the factor which affects most is the way the followup papers are organized and linked in an IDT. In other words, {\em a highly influential paper tends to have an IDT with high breadth as well as high depth}. For reproducibility, the code and the dataset are available at \url{https://github.com/LCS2-IIITD/influence-dispersion}.
 
\section{Related work}
\label{sec:related-work}
There has been a plethora of research to measure the impact of scientific articles through various forms of citation analysis. In this section, we separate the related work into two parts -- (i) studies dealing with citation count and its variants for measuring the impact,  and (ii) studies  exploring detailed orchestration of citations around scientific papers.

\subsection{Citation Count as Impact Measure}
Searching for accurate and reliable indicators of research performance has a long and often controversial history. 
Citation data is frequently used to measure scientific impact \cite{garfield1972citation,garfield1964science}. Most citation indicators are based on citation counts -- Journal Impact Factor \cite{garfield2006history}, $h$-index \cite{hirsch2005index}, Eigenfactor \cite{fersht2009most}, i-10 index \cite{connor2011google}, c-index \cite{post2018c},  etc. Many variations and adaptations were proposed to compensate the drawbacks of these indices. For instance, $m$-quotient \cite{hirsch2005index,thompson2009new} attempts to eliminate the bias of $h$-index towards older researchers/articles. $g$-index \cite{egghe2006improvement} and $e$-index \cite{zhang2009index} were proposed to overcome bias again authors with heavily cited articles.
We proposed $C^3$-index \cite{Pradhan2017} to resolve ties while ranking medium-cited and low-cited authors by $h$-index. 
Even though so many variations of h-index were proposed in the literature, \citet{bornmann2011multilevel} concluded that most of them are redundant by showing a mean correlation coefficient of $0.8$-$0.9$ between h-index and its 37 alternatives. Few attempts were made to quantify the contribution of individual authors in multi-authored publications \cite{ioannidis2008measuring,howard1983research,romanovsky2012revised,lee2009use}.  

To measure the impact of a scientific article, raw citation count has by far been the most accepted and well studied metric \cite{redner1998popular,radicchi2008universality}. However, many studies confronted with different views against citation count, giving rise to several alternatives such as {\em influmetrics} \cite{bollen2006mapping}, {\em webometrics} \cite{almind1997informetric}, {\em usage metrics} \cite{kurtz2011usage}, {\em altmetrics} \cite{haustein2014coverage}, etc.  \citet{chakraborty2015categorization} showed that the change in yearly citation count of articles published in journals is different from articles published in conferences. Even the evolution of yearly citation count of papers varies across disciplines \cite{chakraborty2018universal,ravenscroft2017measuring}. This further raises a new proposition of designing domain-specific impact measurement metrics.

\subsection{Understanding Citation Expansion} Despite such a vast literature on the use of citation count for assessing the quality of scientific community, the evolution of citation structure has remained largely unexplored. There have been a few recent studies which attempted to understand the organization of citations around a scientific entity (paper, author, venue etc.), particularly focusing on the topology of the graph constructed from the induced subgraph of papers citing the seed paper.  \citet{waumans2016genealogical} took an evolutionary perspective to propose an algorithm for constructing genealogical trees of scientific papers on the basis of their citation count evolution over time. This is useful to trace the evolution of certain concepts proposed in the seed paper. \citet{Singh:201} developed a relay-linking model for prominence and obsolescence
 to include the factors like aging, decline etc. in the evolving citation network.
 \citet{min2018innovation} characterized the citation diffusion process using a classic marketing model \cite{bass1969new} and noticed some interesting patterns in the spread of scientific ideas.  Inspired by information cascade modeling in online social networks \cite{Cheng:2014},   they \cite{min2017quantifying} further made an attempt to study the behavior of citation cascade. They concluded that the average depth of the cascade tends to be influenced by both the lifespan and the whole volume of scientific literature.  \citet{huang2018number} and \citet{abs-1806-00089} argued that citation cascade helps us better understand the citation impact of a scientific publication. They empirically showed that most of the properties of the cascade graph (such as cascade size, edge count, in-degree, and out-degree) follow typical power law distributions; however cascade depth follows exponential distribution. 
 
\subsection{Differences from Previous Literature} Although recent studies \cite{min2017quantifying,huang2018number,abs-1806-00089} argued that there is a need to explore the organization of citations (followup papers) around a seed paper in order to measure better scientific impact, no one quantitatively studied the impact of such network. We are the first to propose an impact measurement metric, called `Influence Dispersion Index' (Section \ref{sec:idi}) which is derived upon converting a rooted citation network to a sparse representation, called `influence dispersion tree' (IDT) (Section \ref{sec:approach}). We show how an optimal orientation of CDT (in terms of its depth and breadth) helps in gaining more impact, which may not be explained by simple citation count. Moreover, the construction of IDT is unique and different from the citation cascade graph proposed earlier \cite{min2017quantifying,huang2018number,abs-1806-00089} (see Section \ref{sec:approach} for more details).

\section{Influence Dispersion Tree (IDT)}
\label{sec:approach}

In this section, we first develop and define the concept of Influence Dispersion Tree of a scholarly paper and describe some of the properties of IDTs. We then develop a simple measure to estimate the \textit{influence} of a scholarly paper given its IDT. 

\subsection{Constructing IDT}
\label{sec:idt-construction}

\begin{figure*}[ht!]
    \centering
    \includegraphics[width=0.9\textwidth]{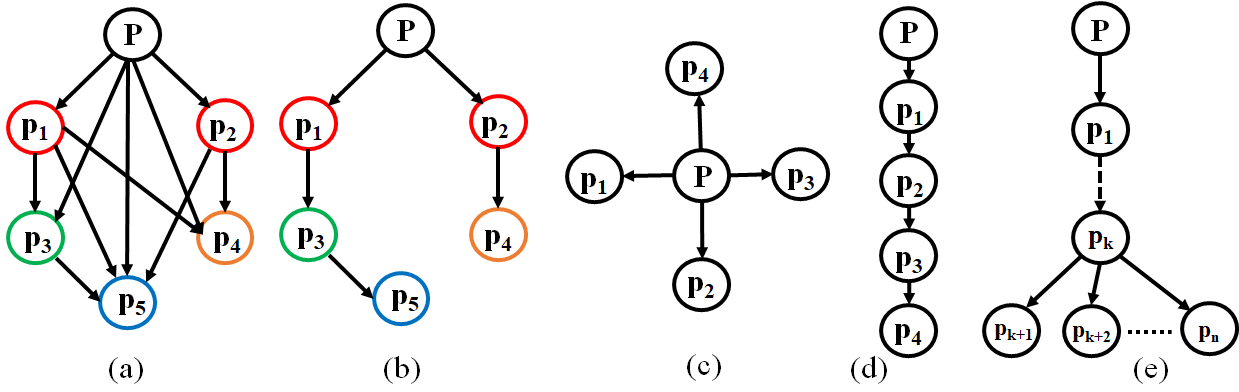}
    \caption{(a)-(b) Illustration of the construction of (b) IDT from (a) IDG of paper $P$. Papers in red only cite $P$; Papers in green cite $P$ and one other paper in the graph; blue paper cites $P$ and more than one other paper in the graph. In case of yellow paper, a tie-breaking occurs due the equal possibility of $p_4$ being connected from $p_1$ and $p_2$ in order to maximize the depth of IDT. Tie-breaking is resolved by randomly connecting $p_4$ from $p_2$ in IDT. (c)-(d) Two corner cases to illustrate the lower bound -- minimum and maximum number of leaf nodes. (e) A configuration of a $P$-rooted IDT with $(n)$ non-root nodes that results in maximum IDI value.}
    \label{fig:toyexample}
\end{figure*}

Let us consider a scholarly paper $P$ and let $\mathcal{C}_P = \{p_1,p_2,\dots,p_n\}$ be the set of papers citing $P$. We assume that $P$ has {\em equally  and directly} influenced each and every paper in $\mathcal{C}_P$.\footnote{Although previous studies~\citep{chakraborty2016all,zhu2015measuring} have found that a paper has a varying amount of influence on its citing papers, 
it is a common practice to assume uniform influence for simplification (e.g., in computing impact factors, h-index~\citep{hindex}, etc.) and is the assumption we also make.}

\begin{definition}\ [{\bf Influence Dispersion Graph}]
The Influence Dispersion Graph (IDG) of the paper $P$ is a directed and rooted graph $\mathcal{G}_P(\mathcal{V}_P,\mathcal{E}_P)$ with $\mathcal{V}_P = \mathcal{C}_P \cup \{P\}$ as the vertex set and $P$ as the root. The edge set $\mathcal{E}_P$ consists of edges of the form $\{p_u \rightarrow p_v\}$ such that $p_u \in \mathcal{V}_P, p_v \in \mathcal{C}_P$ and $p_v$ cites $p_u$. 
\end{definition}

Figure \ref{fig:toyexample}(a) shows an illustration of an IDG for the paper $P$ and its citing paper set $\{p_1,p_2,p_3,p_4,p_5\}$. Observe that the IDG of paper $P$ is the same as the induced subgraph of the larger citation graph consisting of $P$ and  all its citing papers, and with edges in the opposite direction to indicate the propagation of influence from the cited paper to the citing paper. Further, note that the construction of an IDG is similar to that of citation cascades~\citep{huang2018number,min2018innovation} with the fundamental difference that the IDG is restricted strictly to the one-hop citation neighborhood of $P$ (i.e., papers that are directly influenced by $P$) as opposed to the citation cascade that considers higher order citation neighborhoods as well (i.e., papers indirectly influenced by $P$). Thus, an IDG only considers followup papers that are {\em directly influenced} by a given paper. If $p_1$ cites $P$; and $p_2$ cites $p_1$ but  not $P$, it is not always clear if $p_2$ is influenced by both $P$ and $p_1$, or solely by $p_1$.  
Thus, we make the stricter and unambiguous choice by selecting only $p_1$ to be included in the IDG. Though variants of IDG could be constructed by adding additional followup papers, we believe that the major conclusions drawn from the paper will remain valid owing to the stricter and unambiguous process of constructing the IDG.

Next, to further analyze and study the influence of paper $P$ on its citing papers, we derive the \emph{Influence Dispersion Tree} (IDT) of $P$ from its IDG. 
A tree structure, by definition, provides a hierarchical view of the influence $P$ exerts on its citing papers and provides an easy to understand representation to study the relation between $P$ and its citing papers. The IDT of paper $P$ is a directed and rooted tree $\mathcal{T}_P = \{\mathcal{V}_P,\mathcal{E'}_P\}$ with $P$ as the root. The vertex set is the same as that of IDG of $P$ and the edge set $\mathcal{E'}_P \subset \mathcal{E}_P $ is derived from the edge set of IDG as described next.

Note that a paper $p_v \in \mathcal{C}_P$ can cite more than one paper in $\mathcal{V}_P$, giving rise to the following three possibilities:

\begin{enumerate}
	\item $p_v$ cites only the root paper $P$. In this case, we add the edge $P \rightarrow p_v$ creating a new branch in the tree emanating from root node (e.g., edges $P \rightarrow p_1$ and $P \rightarrow p_2$ in Fig. \ref{fig:toyexample}(b)). 
	
	\item $p_v$ cites the root paper $P$  and $p_u \in \mathcal{C}_P \setminus \{p_v\}$. In this case, we say that $p_v$ is influenced by $P$ as well as $p_u$. There are two possible edges here: $P \rightarrow p_v$ and $p_u \rightarrow p_v$. However, since $p_u$ is also influenced by $P$, the edge $p_u \rightarrow p_v$ indirectly captures this influence that $P$ has on $p_v$.  We therefore  retain only the edge $p_u \rightarrow p_v$. This choice leads to addition of a new leaf node in IDT capturing the chain of impact starting from $P$ up to the leaf node $p_v$ (e.g., edge $p_1 \rightarrow p_3$ in  Fig. \ref{fig:toyexample}(b)). 
	
\myComment{	\item $p_v$ cites the root paper $P$, as well as a set of other papers $P_u \subseteq \mathcal{C}_P \setminus \{p_v\}$, $|P_u| >= 2$. Note that by definition, each $p \in P_u$ also cites the root paper $P$. The possible edges to add here are $E = \{P \rightarrow p_v\} \cup \{p \rightarrow p_v\}; \forall p \in P_u$. We retain the edge $e$ in $\mathcal{E}_P$ such that
	\begin{equation}
	    e = \argmax_{e' \in E} shortestPathLength(P,p_v)
	    \label{eq:depth-maximization}
	\end{equation}
    Edge $P_3 \rightarrow P_5$ in Fig. \ref{fig:toyexample}(b) is such an edge.
 }   
    \item 
    $p_v$ cites the root paper $P$, as well as a set of other papers $P_u \subseteq \mathcal{C}_P \setminus \{p_v\}$, $|P_u| >= 2$. Note that by definition, each $p \in P_u$ also cites the root paper $P$. The possible edges to add here are $E =  \{\{p \rightarrow p_v\}; \forall p \in P_u$\}. We add the edge $e$ to $\mathcal{E}'_P$ such that $e = {p \rightarrow p_v}$ where
	\begin{equation}
	    p = \argmax_{p' \in P_u} \;  shortestPathLength(P,p') 
	    \label{eq:depth-maximization}
	\end{equation}
    Edge $P_3 \rightarrow P_5$ in Fig. \ref{fig:toyexample}(b) is such an edge.
    
 \myComment{   
    \item \note{$p_v$ cites the root paper $P$, as well as a set of other papers $P_u \subseteq \mathcal{C}_P \setminus \{p_v\}$, $|P_u| >= 2$. Note that by definition, each $p \in P_u$ also cites the root paper $P$. The possible edges to add here are $E = \{\{p \rightarrow p_v\}; \forall p \in P_u$\}. We do not consider the edge $p \rightarrow p_v$ as explained in the previous point. We retain the edge $e$ in $\mathcal{E}_P$ such that one of its endpoints $p_u$ is given by
	\begin{equation}
	    p_u = \argmax_{p_u \in P_u} shortestPathLength(P,p_u)
	    \label{eq:depth-maximization}
	\end{equation}
	Thus, $e = p_u \rightarrow p_v$.\\
    Edge $P_3 \rightarrow P_5$ in Fig. \ref{fig:toyexample}(b) is such an edge.}
}
\end{enumerate}

The intuition behind adding edges in this way is to maximize the depth of IDT (if there are more than one edge, and each of which maximizes the depth, then we choose one of them randomly, e.g.,  $p_2 \rightarrow p_4$ in Fig. \ref{fig:toyexample}(b)). The edge construction mechanism is motivated by the citation cascade graph \cite{min2017quantifying,huang2018number}. Upon adding a newly citing paper in $\mathcal{T}_P$, we reconstruct $\mathcal{T}_P$  in such a way that the richness of $P$'s influence to its citing papers is maximally preserved. Richness maximization can be thought of as  maximizing the breadth or the depth of the IDT. We choose the latter one in order to capture the cascading effect into the resultant IDT. 

\begin{definition} [{\bf Influence Dispersion Tree}]
The Influence Dispersion Tree (IDT) of paper $P$ is a tree $\mathcal{T}_P(\mathcal{V}_P, \mathcal{E'}_P)$, whose vertex set $\mathcal{V}_P$ is the union of $P$ and all the papers citing $P$. If a paper $p_v$  cites only $P$ and no other papers in $\mathcal{V}_P$, we add $P \rightarrow p_v$ into the edge set $\mathcal{E}'_P$. If $p_v$ cites other papers $P_u\in \mathcal{V}_P\setminus \{P\}$ along with $P$, we add only one edge $p_x\rightarrow p_v$ (where $p_x\in P_u$) according to Equation~
\ref{eq:depth-maximization}.
\end{definition}

\begin{definition}[{\bf $P$-rooted IDT}]
An IDT is called $P$-rooted IDT when the root node of the tree is $P$.
\end{definition}

Figure~\ref{fig:toyexample} illustrates a toy example of constructing IDT from IDG illustrating all three possible cases of edge connections as discussed above.

\subsection{Properties of IDT}
\label{sec:idt-properties}
In this section, we describe a few important properties of an IDT.\\

\noindent \textbf{(i) Depth:} The depth $d$ of a $P$-rooted IDT is defined as the length of the longest path from the root to the leaf nodes $p_L$ in the tree. 
\begin{equation}
    d = \max_{p_l \in p_L} shortestPathLength(P,p_l)
\end{equation}
where $d$ is the depth of the tree, and $p_L$ is the set of leaf nodes in IDT. The depth of the IDT shown in Figure~\ref{fig:toyexample}(b) is $3$.

The depth of an IDT can be interpreted as the longest chain/series of papers representing a body of work influenced by $P$.

\noindent\textbf{(ii) Breadth:} The breadth $b$ of a $P$-rooted IDT  is defined as the maximum number of nodes at a given level in the tree.
\begin{equation}
    b = \max_{1 \leq l \leq d} |N_l|; \quad N_l:= \{ n \in \mathcal{V}_P | level(n) = l\}
\end{equation}
The breadth of the IDT shown in Figure~\ref{fig:toyexample}(b) is $2$.

\noindent\textbf{(iii) Branch:} A branch $P\rightsquigarrow p_l$ is a path from the root $P$ to the leaf $p_l$ in an IDT.

\noindent\textbf{(iv) Fragmented and Unified Branch:} A branch $P\rightsquigarrow p_l$ is called fragmented when an intermediate node (except root) $p\in P\rightsquigarrow p_l$ becomes a part of another branch $P\rightsquigarrow p_{l'}$. $p$ is then called a {\bf fragment point} of  $P\rightsquigarrow p_l$. In Figure \ref{fig:toyexample}(e), $P\rightsquigarrow p_{k+1}$ is a fragmented branch with $p_k$ as a fragment point. If a branch is not fragmented, it is called as a unified branch.  In Figure \ref{fig:toyexample}(d), $P\rightsquigarrow p_{4}$ is a unified branch.

We now define some properties to describe how depth and breadth of a $P$-rooted IDT are related with $n$ -- the number of citations of $P$ (and the number of non-root nodes in the IDT of $P$).

\begin{lemma}\label{lemma:1}
For a paper $P$ with $n$ citations, the range of the depth $d$ and breadth $b$ of the $P$-rooted IDT is  
        $1 \leq d,b \leq n$.
\end{lemma}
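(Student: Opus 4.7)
The plan is to establish the four bounds $d \geq 1$, $d \leq n$, $b \geq 1$, and $b \leq n$ separately, exploiting the fact that the IDT of $P$ has exactly $n+1$ vertices (the root $P$ and the $n$ citing papers) and is a rooted tree by construction.

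For the two upper bounds, I would rely on elementary tree counting. Since the longest root-to-leaf path in any tree on $n+1$ vertices contains at most $n$ edges, the definition of $d$ as the length of the longest such path immediately gives $d \leq n$. Equality is realized by the chain configuration depicted in Figure~\ref{fig:toyexample}(d), where each $p_{i+1}$ cites $p_i$ and $P$, so the tie-breaking rule of Equation~\ref{eq:depth-maximization} places $p_{i+1}$ directly below $p_i$. For the breadth, the set $N_l$ of nodes at any level $l$ is a subset of $\mathcal{C}_P$, whose cardinality is $n$, so $b \leq n$, with equality realized by the star configuration of Figure~\ref{fig:toyexample}(c) in which no citing paper cites any other.

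For the lower bounds, the key step is showing that level 1 of the IDT is non-empty whenever $n \geq 1$. I would observe that the citation graph restricted to $\mathcal{C}_P$ is a DAG, because citations respect the temporal order of publication. Consequently there exists at least one paper $p^\star \in \mathcal{C}_P$ that cites no other paper in $\mathcal{C}_P$ (a source in the reversed DAG). By case~(1) of the IDT construction, the edge $P \to p^\star$ is added to $\mathcal{E}'_P$, so $p^\star$ sits at level 1. This immediately yields $d \geq 1$ (a root-to-$p^\star$ path of length $1$ exists) and $b \geq 1$ (the level-1 count is at least $1$, and $b$ is a maximum over levels).

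The argument is almost entirely bookkeeping; the only step that requires a little care is the non-emptiness of level 1, since one might worry that every citing paper is attached below some other citing paper by case~(2) or~(3). The DAG property rules this out, which is the genuine content of the lemma. The extremal configurations in Figure~\ref{fig:toyexample}(c,d) double as tightness witnesses for the upper bounds, completing the proof.
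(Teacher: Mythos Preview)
Your argument is correct and in fact more complete than the paper's own proof. The paper dispatches Lemma~\ref{lemma:1} in two sentences: it simply exhibits the star configuration of Figure~\ref{fig:toyexample}(c) as the case where $b=n$ and the chain configuration of Figure~\ref{fig:toyexample}(d) as the case where $d=n$, and stops there. It does not separately argue that $d$ and $b$ cannot exceed $n$, nor does it address the lower bounds $d,b\geq 1$ at all.

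Your proposal covers the same extremal witnesses but adds the missing pieces: the counting observation that a root-to-leaf path in a tree on $n+1$ vertices has at most $n$ edges (hence $d\leq n$), the inclusion $N_l\subseteq\mathcal{C}_P$ (hence $b\leq n$), and the DAG-source argument guaranteeing a non-empty level~1 (hence $d,b\geq 1$). The DAG step is a genuine addition; without it one could not formally rule out the degenerate worry you raise about every citing paper being attached below another. What you gain is rigor; what the paper gains is brevity, since for its audience the bounds are evidently trivial once the two extremal pictures are on the table.
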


\begin{proof}
The breadth of a $P$-rooted IDT will be maximum (i.e, $n$) when all the $n$ papers cite only the root paper $P$, and there is no citation among these $n$ papers (e.g. Figure~\ref{fig:toyexample}(c)). Likewise, the depth of a $P$-rooted IDT will be maximum (i.e., $n$) when there is a chain of $n$ papers $\{P, p_1, p_2,\cdots,p_{n}\}$ forming a unified branch such that $p_i$ cites $p_{i-1}$, $\forall 2\leq i \leq n$; and $p_i$ also cites $P$, $\forall i$ (e.g., Figure~\ref{fig:toyexample}(d)).
\end{proof}

\begin{lemma}\label{lemma:2}
For a paper $P$ with $n$ citations, the sum of depth $d$ and breadth $b$ of the $P$-rooted IDT is bounded by $n+1$, i.e., 
$d + b \leq n+1$.
\end{lemma}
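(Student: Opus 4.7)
The plan is a straightforward counting/double-counting argument that exploits the fact that the longest root-to-leaf path and the widest level of the tree must intersect.

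First I would fix the notation: the $P$-rooted IDT has $n+1$ vertices total (the root $P$ plus its $n$ citing papers). Let $\pi = (P = q_0, q_1, \dots, q_d)$ be a longest path witnessing the depth, so $q_i$ lies at level $i$ for $0 \leq i \leq d$. Let $\ell^\star$ be a level at which the breadth is attained, so $|N_{\ell^\star}| = b$, and by the definition of breadth from Section~\ref{sec:idt-properties} we have $1 \leq \ell^\star \leq d$.

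Next I would partition the non-root vertices. The $d$ vertices $q_1, \dots, q_d$ are distinct non-root nodes contributed by the longest path. In particular, $q_{\ell^\star}$ belongs to $N_{\ell^\star}$. Since $|N_{\ell^\star}| = b$, there are at least $b - 1$ further vertices in $N_{\ell^\star}$ that are distinct from $q_{\ell^\star}$ and therefore distinct from every $q_i$ (because each $q_i$ is the unique member of the longest path at level $i$). Adding these to the $d$ vertices from $\pi$ yields at least $d + (b - 1)$ distinct non-root vertices, which must be bounded above by $n$. Rearranging gives $d + b \leq n + 1$, which is exactly the claim.

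The only step requiring any care is confirming that $\ell^\star$ really falls inside the range $\{1, \dots, d\}$ so that the longest path does contribute a vertex to the widest level (otherwise the overlap argument collapses). This is immediate from the definition of $b$ as the maximum of $|N_l|$ over $1 \leq l \leq d$, together with Lemma~\ref{lemma:1} to rule out the degenerate case. I do not anticipate any serious obstacle; the whole argument is essentially a one-line inclusion-exclusion once the longest path and the widest level are named explicitly.
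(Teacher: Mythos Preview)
Your argument is correct. The paper, however, proves the lemma by a different, more dynamic route: it imagines building the IDT one citing paper at a time and observes that each added non-root node can increase $d$ by at most one and $b$ by at most one, and that only the very first node attached to $P$ can increase both simultaneously; summing over the $n$ insertions then gives $d+b\le n+1$. Your approach is static instead of incremental: you look at the finished tree, name a longest root-to-leaf path and a widest level, and exploit the single forced overlap between them to count $d+(b-1)$ distinct non-root vertices. What your version buys is rigor with essentially no extra work---the overlap is guaranteed by the paper's own definition of $b$ as $\max_{1\le l\le d}|N_l|$, and everything else is just set cardinality. The paper's incremental phrasing is more intuitive but leaves the claim ``only the first node can increase both'' somewhat informal, whereas your inclusion-exclusion pins down the $+1$ exactly.
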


\begin{proof}
When a new node is added to IDT, there are four possibilities -- breadth increases, depth increases, both increase, and neither increases. The sum of $d$ and $b$ will be maximum when both of them are individually maximum. This will only be possible when all but the root node are involved in either increasing depth or breadth or both. However, we can see that only one node, i.e., the first node attached to the root node, can increase both depth and breadth, and the rest will increase either depth or breadth, but not both. Since the total number of non-root nodes added to IDT are $n$, the sum of $b$ and $d$ can attain a maximum value of $n+1$.  
\end{proof}

\begin{lemma}\label{lemma:3}
For a paper $P$ with $n$ citations and its $P$-rooted IDT, the product of its depth $d$ and breadth $b$ is at least $n$, i.e., 
$db \geq  n$
\end{lemma}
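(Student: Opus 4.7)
The plan is to count the non-root nodes of the IDT by grouping them according to their level (distance from the root $P$), and then bound that count in two different ways to obtain the inequality.

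First, I would recall the definitions already established: the depth $d$ is the length of the longest root-to-leaf path, so every node in $\mathcal{T}_P$ sits at some level $l$ with $0 \leq l \leq d$, where level $0$ contains only the root $P$. Define, as in the definition of breadth, the set $N_l = \{v \in \mathcal{V}_P \mid \mathrm{level}(v) = l\}$. Since $\mathcal{T}_P$ is a rooted tree on $n+1$ vertices and the root is the unique vertex at level $0$, partitioning by level gives
\begin{equation*}
    n = \sum_{l=1}^{d} |N_l|.
\end{equation*}

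Next, I would use the definition of breadth, which states $b = \max_{1 \leq l \leq d} |N_l|$. This immediately yields $|N_l| \leq b$ for every $l \in \{1, \dots, d\}$. Substituting into the sum above, we get
\begin{equation*}
    n = \sum_{l=1}^{d} |N_l| \leq \sum_{l=1}^{d} b = d \cdot b,
\end{equation*}
which is the desired inequality $db \geq n$.

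There is no real obstacle here — the argument is essentially a level-decomposition pigeonhole bound, and it does not depend on the particular edge-selection rule used when constructing the IDT from the IDG. The only minor subtlety worth noting is why levels $1, \dots, d$ suffice to cover every non-root node (equivalently, why no non-root node lies beyond level $d$); this is immediate from the definition of $d$ as the maximum root-to-leaf distance, since every non-root node lies on some root-to-leaf path.
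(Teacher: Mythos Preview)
Your proof is correct. The level-decomposition argument is airtight: partitioning the $n$ non-root vertices by their level, bounding each level's size by $b$, and summing over the $d$ levels immediately gives $n \leq db$.

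The paper takes a different route. Rather than decomposing by levels, it argues in terms of root-to-leaf \emph{branches}. It first considers the special case where the tree branches only at the root: then the breadth $b$ equals the number of branches, each branch has length at most $d$, and so the tree can hold at most $db$ non-root nodes while maintaining that depth and breadth; hence $db - n \geq 0$. For trees with internal (non-root) branching points, the paper observes that nodes above a fragment point are shared by several root-to-leaf paths and are therefore overcounted by the product $db$, giving the strict inequality $db > n$. Your argument is more elementary and more uniform: it avoids the case split entirely and does not rely on interpreting $b$ as a number of branches (an identification that requires some justification in the paper's setting). The paper's approach, on the other hand, ties the inequality to the geometric picture of the ideal IDT configuration used later in Section~\ref{sec:optimaldispersion}, which motivates the cost function $db - n$.
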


\begin{proof}

$d$ is the maximum length of any branch,  and $b$ is indicative of the number of branches from root to leaf. So, for an IDT whose branching occurs at the root node itself and nowhere else, $db$ represents the number of nodes it can have to maintain its depth as $d$ and breadth as $b$ by adding to those branches which have less than $d$ length. Since $n$ is the number of nodes already present in the IDT, we can say that the number of nodes we can add is $db - n$. Since this quantity is always non-negative as this quantity represents the number of nodes we can add, we have 
\begin{equation}
    db - n \geq 0 \implies db \geq n 
\end{equation}

For those IDTs which have branching in places other than the root i.e., fragmented branches, the nodes which are above the branching nodes, will be counted more than once as they represent multiple root to leaf paths and hence $db$ will give more number of nodes than present in the IDT; hence 
\begin{equation}
    db > n
\end{equation}
Therefore, for both the cases, it is seen that $db \geq n$.
\end{proof}

\subsection{Influence Dispersion Index (IDI)}\label{sec:idi}
Given the IDT of a paper, we define its Influence Dispersion Index (IDI) by the sum of length of all the paths from the root node to all the leaf nodes. 

\begin{definition} [{\bf Influence Dispersion Index}]
The IDI of paper $P$ is defined  as
\begin{equation}\label{eq:idt}
IDI(P) = \sum_{p_l \in p_L} distance(P,p_l)
\end{equation}
where $p_L$ is the set of leaf nodes of the $P$'s IDT $\mathcal{T}_P(\mathcal{V}_P, \mathcal{E}_P)$. 
\end{definition}
 The IDI of $P$ in Figure \ref{fig:toyexample}(b) is $5$.

Intuitively, each leaf node in $P$'s IDT corresponds to a separate branch emanating from the original paper $P$. 
Each branch comprises of the set of papers which are influenced by the root paper in one direction. 
We can interpret IDI as a measure of the \textit{ability} of the paper to distribute its influence. We hypothesize that the more an IDT has unified branch, the more the chance that the influence emanating from $P$  is distributed uniformly.

\begin{figure}
    \centering
    \includegraphics[width=0.9\columnwidth]{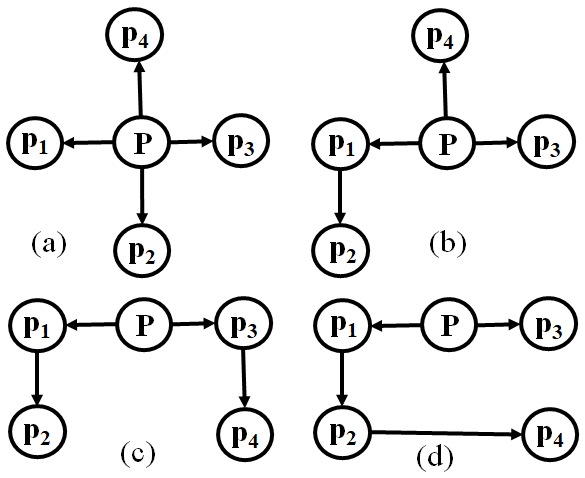}
    \caption{Reconnecting leaf edges of a star IDT (a) to form other configurations.}
    \label{fig:reconfig}
\end{figure}

\subsection{Boundary Conditions of IDI} \label{sec:bound}
\subsubsection{Lower Bound} For a $P$-rooted IDT with $n$ non-root nodes, the minimum value of IDI is $n$. This is because each node (paper) in the tree will be encountered at least once while computing IDI, resulting in the lower bound as $n$. Figures \ref{fig:toyexample}(c) and (d) show two corner cases -- one configuration with the minimum number of leaf nodes (i.e, $1$), and other configuration with the maximum number of leaf nodes (i.e., $n$).  Note that given the size of the IDT, there can be multiple configurations with minimum IDT values. From a star IDT (Figures \ref{fig:toyexample} (c)) if we pick an edge and connect it to any leaf node or the root node, then IDI of the resultant configuration will remain same. In fact, if we keep on repeating the same repairing step, all the resultant configurations will exhibit the same IDI value. In short, during the transformation of a star IDT to a line IDT by reconnecting a leaf edge (an edge whose one end node is a leaf) to another leaf node or to the root node, all the intermediate IDTs will exhibit the same IDI of $n$. Figure \ref{fig:reconfig} shows a toy example of the reconfiguration. We will discuss more in Section \ref{sec:optimaldispersion}.

\subsubsection{Upper Bound:}  In order to maximize the value of IDI, a $P$-rooted IDT should satisfy the following three conditions:
\begin{enumerate}
    \item The number of leaves should be as large as possible.
    \item The length of the branch from root to leaf should be as long as possible.
    \item The number of common nodes in each root-to-leaf branch should be maximized so that each node counter is maximized. 
\end{enumerate}

Subject to the constraint on the number of nodes in the tree (i.e., $n+1$), there is only one structure which can satisfy all the three requirements mentioned above, as shown in Figure \ref{fig:toyexample}(e).

Let IDI of the $P$-rooted IDT with $n$ non-root nodes as shown in Figure \ref{fig:toyexample}(e) be $IDI(P,k)$, where $k$ is the number of nodes forming a chain from $P$ (excluding $P$) and node $p_k$ has $(n-k)$ descendants. Then, $IDI(P,k)$ is determined as follows:
\begin{equation}
    IDI(P,k)  = k(n-k) + (n-k)
\end{equation}

Differentiating it w.r.t to $k$,  we get
\begin{equation}
    \frac{\partial IDI(P,K)}{\partial k} = n - 2k - 1
\end{equation}    

Equating this to $0$ to get the maxima, we get
\begin{equation}
    k = \left \lfloor \frac{n-1}{2} \right \rceil
\end{equation}

This yields the maximum value of IDI as 
    \begin{equation}\label{eq:maxidi}
        IDI(P)^{max} = (1 + \left \lfloor \frac{n-1}{2} \right \rceil) (n - \left \lfloor \frac{n-1}{2} \right \rceil)
    \end{equation}
Therefore, for a $P$-rooted IDT with $n$ non-root nodes, we have the following bounds on its IDI:

\begin{equation}
   \boxed{ n \quad \leq \quad IDI(P) \quad\leq\quad  (1 + \left \lfloor \frac{n-1}{2} \right \rceil) (n - \left \lfloor \frac{n-1}{2} \right \rceil)}
   \label{eq:idi-bounds}
\end{equation}

\noindent

\subsubsection{Relation between $d,b$ and $n$ for Optimal Dispersion}
\label{sec:optimaldispersion}
As discussed above, a paper with a given number of citations $n$, can have differently shaped IDTs, and consequently, very different IDI values. Intuitively, we expect a highly influential paper to have multiple long unified branches, i.e., {\em it should have a high depth value as well as high breadth value}. Thus, we want the IDT of a highly influential paper to have high depth, high breadth, and a tree structure such that the number of non-root nodes are as uniformly distributed in different branches of the trees as possible, indicating significant depth in each branch. Also, recall from Lemma~\ref{lemma:3} that for a given value of $d$ and $b$, the number of nodes in an IDT can not be more than $db$ (i.e., $n \leq db$). This leads us to the following constrained objective function that the IDT in its optimal configuration should satisfy.

\begin{equation}
    \begin{aligned}
        minimize  & \quad (db - n) \\
        \textrm{s.t} &   \quad  d+b \leq n+1 & \textnormal{(from Lemma~\ref{lemma:2})}\\
         \text{and} & \quad db \geq n & \textnormal{(from Lemma~\ref{lemma:3})}
    \end{aligned}
    \label{eq:cost-function}
\end{equation}

\noindent    
This yields an optimal configuration where $d = b = \left\lfloor\sqrt{n}\right\rceil$.

\begin{proof}
    As discussed,  $db$ represents the maximum number of nodes the tree can have by having depth as $d$ and breadth as $b$. The IDT will have maximum number of nodes for a given $d$ and $b$ only when all the branches in the IDT are unified branches. This condition will force the IDT to have all the branches to branch out from the root node. If $k$ is the number of nodes in each unified branch of the optimal tree, and there are $r$ such branches, then the number of nodes in this IDT will be $kr$ (assuming equal length for each branch). Since $k$ and $r$ are equal for an optimal IDT as discussed earlier, we have 
    \begin{equation}
        k^{2} = n \\
        \Rightarrow k = \sqrt{n} 
    \end{equation}
    For IDTs where the nodes are not evenly distributed among an equal number of unified branches with each branch having equal number of nodes (in other words, when the number of non-root nodes is not a perfect square), the corresponding $k$ comes out to be 
    \begin{equation}
        k^{2} = n \\
        \Rightarrow k = \left \lceil \sqrt{n} \right \rceil\ 
    \end{equation}
\end{proof}

\myComment{
 \textbf{Proof:} Since the cost function in \ref{eq:cost-function} is always positive, the minimum value it can attain is $0$. Therefore,
 \begin{equation}
     \quad |d-b| = 0 \\ 
     \Rightarrow d = b 
 \end{equation}
Now we know that for the optimal configuration, both its depth and breadth will be equal. 
Putting $d$ = $b$ and using Lemma \ref{lemma:3}, we get 
\begin{equation}
    d^{2} \geq n 
    \label{eq:d-upperbound}
\end{equation}
and since $d^{2} \geq  0$,  we can say that $d \geq \sqrt{n}$. 
Also using Lemma \ref{lemma:2}, we get 
\[2d \leq n + 1\]
\begin{equation}
    \Rightarrow d \leq \frac{n+1}{2} 
    \label{eq:d-lowerbound}
\end{equation}
Using Equations~\ref{eq:d-upperbound} and \ref{eq:d-lowerbound} and the fact that $d \in I^{+} $ we can clearly see that $d = \left \lceil \sqrt{n} \right \rceil$. Since all the numbers are not perfect squares, we will use ceiling and floor to adjust accordingly, as the quantity $db$ should be as close to $n$ as possible. The rationale behind this, we established that $n$ is the lower bound for $db$, and this lower bound also satisfies Equation~\ref{eq:cost-function}.
Therefore, it implies that for optimal configuration, 
\[d = \left \lceil \sqrt{n} \right \rceil\]
and 
\[b = \left \lfloor \sqrt{n} \right \rfloor\]
}
Figure \ref{fig:optimalconf} illustrates a paper with an optimal configuration where the IDT has an equitable distribution in terms of both depth and breadth, indicating that the paper has influenced multiple branches, and all the influenced branches have grown significantly. Note that the cost function favors configurations where the impact of the paper is maximized both in terms of depth and breadth, and hence, will penalize configurations where there exists a large number of short branches (high $b$, low $d$) or very few long branches (high $d$, low $b$).

\begin{figure}
    \centering
    \includegraphics[width=0.6\columnwidth]{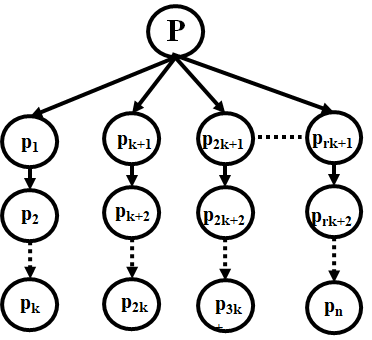}
    \caption{Illustration of an optimal configuration of a $P$-rooted IDT of a paper $P$ with $n$ citations. The depth and breadth of the IDT are same ($k=r=\left\lceil\sqrt{n}\right\rceil$).}
    \label{fig:optimalconf}
\end{figure}

\subsection{IDI as an Influence Measure}
\label{sec:nid}

In this section, we study the potential of IDI as an early predictor of the overall impact and influence of a scholarly article. As discussed before, IDI of a paper $P$ provides a fine-grained view of the influence of $P$ on other papers citing $P$, in terms of the depth and breadth of the IDT. As described in Section~\ref{sec:bound}, for a paper with $n$ citations, there exists an ideal configuration of the IDT that optimizes the influence dispersion of the paper such that it has both high breadth (influenced multiple branches of work) and high depth (significantly deepened each individual branch). With this intuition, we posit that the \emph{closeness} of the actual IDT of a given paper $P$ with $n$ citations, denoted by $\mathcal{T}_P$ to its corresponding ideal IDI with $n$ citations, denoted by $\bar{\mathcal{T}_P}$ can be used as a surrogate measure of influence or impact of paper $P$. We can use any distance metric between two graphs -- such as Graph Edit Distance \cite{gao2010survey}, Gromov-Wasserstein distance \cite{memoli2011gromov} -- to measure the closeness between $\mathcal{T}_P$ and $\bar{\mathcal{T}_P}$. However, all these measures are computationally expensive \cite{gao2010survey}. Therefore, we here use the IDI of each IDT as a proxy for its topological structure and measure the difference between the IDI values of $\mathcal{T}_P$ and $\bar{\mathcal{T}_P}$ (as a replacement of the graph distance). Recall from Section~\ref{sec:bound} that the IDI of an ideal IDT with $n$ non-root nodes is $n$ (which is also the lower bound of an IDT with $n$ internal nodes). 

We define the {\bf Influence Divergence} (ID) of a paper as the difference of the IDI value of its original IDT, IDI(P) and that of its corresponding ideal IDT configuration, $\Bar{IDI}$(P)
\begin{equation}\label{eq:influence}
    ID(P) =  IDI(P) -  \Bar{IDI}(P)
\end{equation}
We further normalize the IDI value using max-min normalization.

\begin{definition} [{\bf Normalized Influence Divergence}]
Normalized Influence Divergence (NID) of a paper $P$ is defined by the difference between the IDI value of its corresponding IDT and the same of its corresponding ideal IDT configuration, $\Bar{IDI}$(P), normalized by the difference between maximum and minimum IDI values of the IDTs with the size as that of $P$'s IDT. Formally, it is written as:

\begin{equation}\label{eq:influence}
    NID(P) =  \frac{IDI(P) -  \Bar{IDI}(P)}{IDI^{max}_{|P|}-IDI^{min}_{|P|}}
\end{equation}
\end{definition}

The normalization is needed to compare two papers with different IDI values. NID ranges between $0$ and $1$.
Clearly, a highly influential paper will have a low $NID(P)$ (i.e., lower deviation from its ideal dispersion index).

\section{Dataset Description}
\label{sec:data}
We used a publicly available dataset of scholarly articles provided by~\citet{chakraborty2018universal}. The dataset contains about $4$ million articles indexed by Microsoft Academic Search (MAS)\footnote{https://academic.microsoft.com/}. For each paper in the dataset, additional metadata such as the title of the paper, its authors and their affiliations, year and venue of publication are also available. The publication years of papers present in the dataset span over half a century allowing us to investigate diverse types of papers in terms of their IDTs.  A unique ID is also assigned to each author and publication venue upon resolving the named-entity disambiguation by MAS itself. We passed the dataset through a series of pre-processing stages such as removing papers that do not have any citation and reference, removing papers that have forward citations (i.e., citing a paper that is published after the citing paper; this may happen due to archiving the paper before publishing it), etc. This filtering resulted in a final set of $3,908,805$  papers. Table~\ref{tab:stat} shows different  statistics of the filtered dataset.

\begin{table}[t]
    \begin{small}
    \centering
    \resizebox{\columnwidth}{!}{
    \begin{tabular}{@{}l| l@{}}
    \toprule
         Number of papers & 3,908,805\\
         Number of unique venues & 5,149\\
         Number of unique authors & 1,186,412\\
         Avg. number of papers per author & 5.21\\
         Avg. number of authors per paper & 2.57\\
         Min. (max.) number of references per paper & 1  (2,432)\\
         Min. (max.) number of citations per paper & 1 (13,102)\\
         \bottomrule
    \end{tabular}
    }
    \caption{Some important statistics about the MAS dataset.}
    \label{tab:stat}
    \end{small}
\end{table}

\begin{table*}[t]
\begin{small}
    \centering
    \begin{tabular}{@{}c L{8cm}cccL{3cm}@{}}
    \toprule
    No. & \multicolumn{1}{c}{Paper} & \# citations & breadth & depth  & \multicolumn{1}{c}{Remark}\\\midrule
    1. & Michael R. Garey and David S. Johnson. 1990. Computers and Intractability; a Guide to the Theory of NP-Completeness. W. H. Freeman \& Co., New York, NY, USA.
 & 13,102 & 4,892 & 34 & A book on the theory of NP-Completeness\\
 
 2. & Cormen, Thomas H., et al. (2001) Introduction to algorithms second edition. & 6777 & 4576 & 8 &  Highly referred text book on Algorithms. \\

 3. & CV. Jacobson. 1988. Congestion avoidance and control. In Symposium proceedings on Communications architectures and protocols (SIGCOMM '88), New York, NY, USA, 314-329. & 2,577 & 259 & 48 & Highly influential paper describing Jacobson's algorithm for control flow in TCP/IP networks  \\
 3. & E. F. Codd. 1970. A relational model of data for large shared data banks. Commun. ACM 13, 6 (June 1970), 377-387. & 2141 & 437 & 42 & Codd's Seminal paper on Relational Databases \\

 \bottomrule   \end{tabular}
    \caption{A  set of representative papers: \#1 and \#2 are the top two papers based on breadth, and \#3 and \#4 are the top two papers based on depth.}
    \label{tab:indicativepaper}
    \vspace{-5mm}
\end{small}
\end{table*}

\section{Empirical Observations}\label{sec:general}
In this section, we report various empirical observations about the IDTs of the papers in our dataset that provide a holistic view of the topological structure of the trees. We also study the how depth and breadth of the IDTs, the IDI and NID values vary with the citation count of the papers.  

\subsection{Structural Properties of IDTs}
\label{sec:properties}

Figure~\ref{fig:depth-breadth-freq} plots the frequency distribution of depth and breadth of the IDTs for all the papers in the dataset. Observe that the values for breadth follow a very long tail distribution with about $75\%$ of papers having a breadth less than or equal to $3$ (note the log-scale on x-axes in Fig.~\ref{fig:b-freq}). On the other hand, the range of the depth values for IDTs is much smaller compared to the range of breadth values. The maximum value of depth is $48$  compared to the maximum breadth of $4,892$. To illustrate the types of papers that achieve very high breadh and depth values, Table~\ref{tab:indicativepaper} lists the top two papers having maximum depth (Papers 1 and 2) and maximum depth (Papers 3 and 4) in our dataset. Note that Papers 1 and 2 are famous Computer Science textbooks resulting in such high breadth values as  most of the citing papers of a book (or survey papers) usually cite the book as a background reference. This may lead to  a large number of short branches in the IDT. On the other hand, Papers 3 and 4 correspond to breakthrough seminal papers --  Paper 3 was among the first to discuss and propose a solution for control flow problem in TCP/IP networks, and Paper 4 is Codd's seminal paper introducing relational databases. These groundbreaking works led to multiple followup papers that build upon these papers resulting in very high depth and relatively low breadth. Also note that even though Papers 3 and 4 have relatively fewer citations than Papers 1 and 2, analyzing the IDT enables us to {\em understand the depth and breadth of the impact of these papers on their citing papers} and measure the influence these papers have had on the fields.

\begin{figure}[t!]
    \centering
	\subfloat[Depth\label{fig:d-freq}]{\includegraphics[width=0.47\columnwidth]{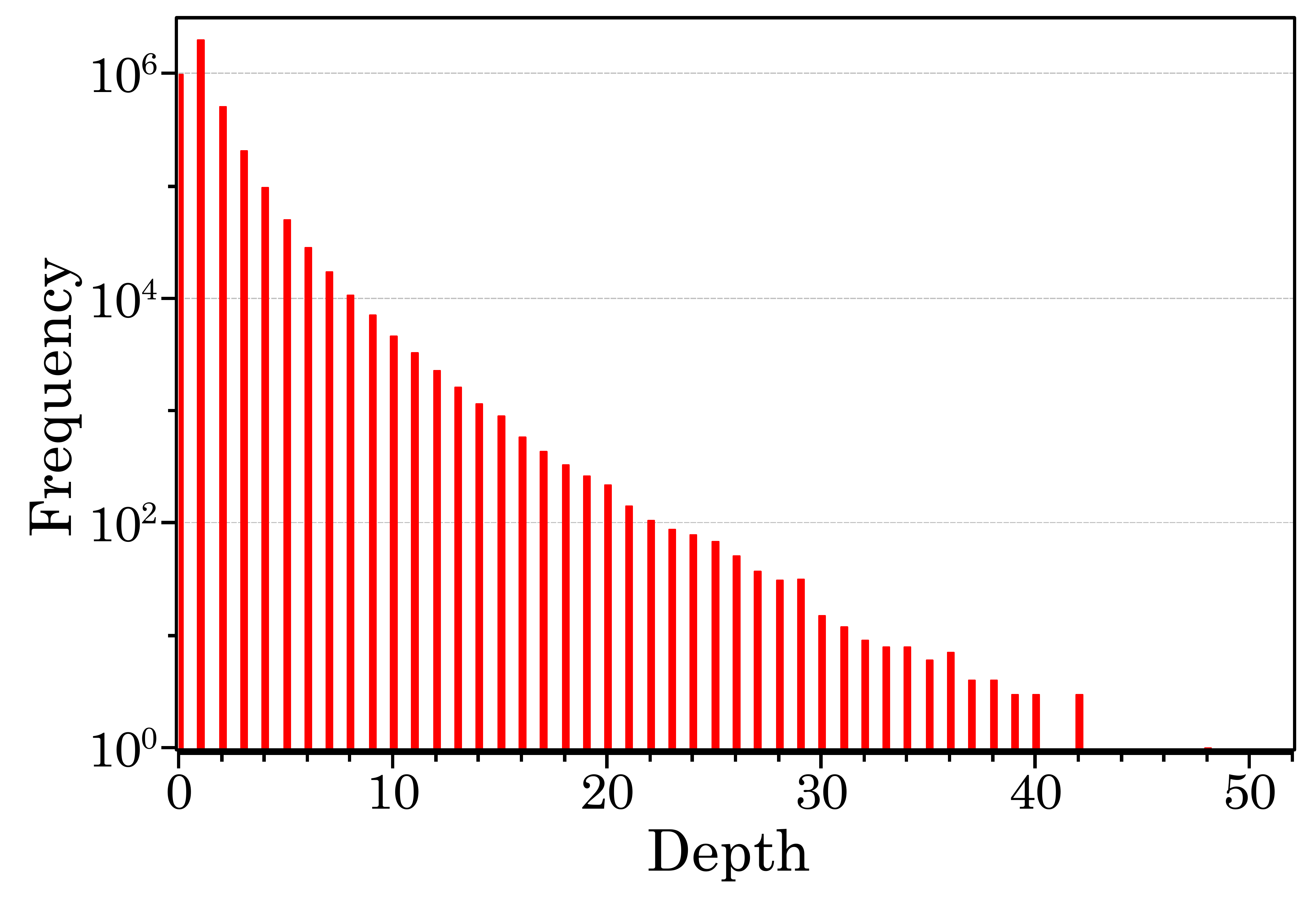}}\quad
	\subfloat[Breadth\label{fig:b-freq}]{\includegraphics[width=0.47\columnwidth]{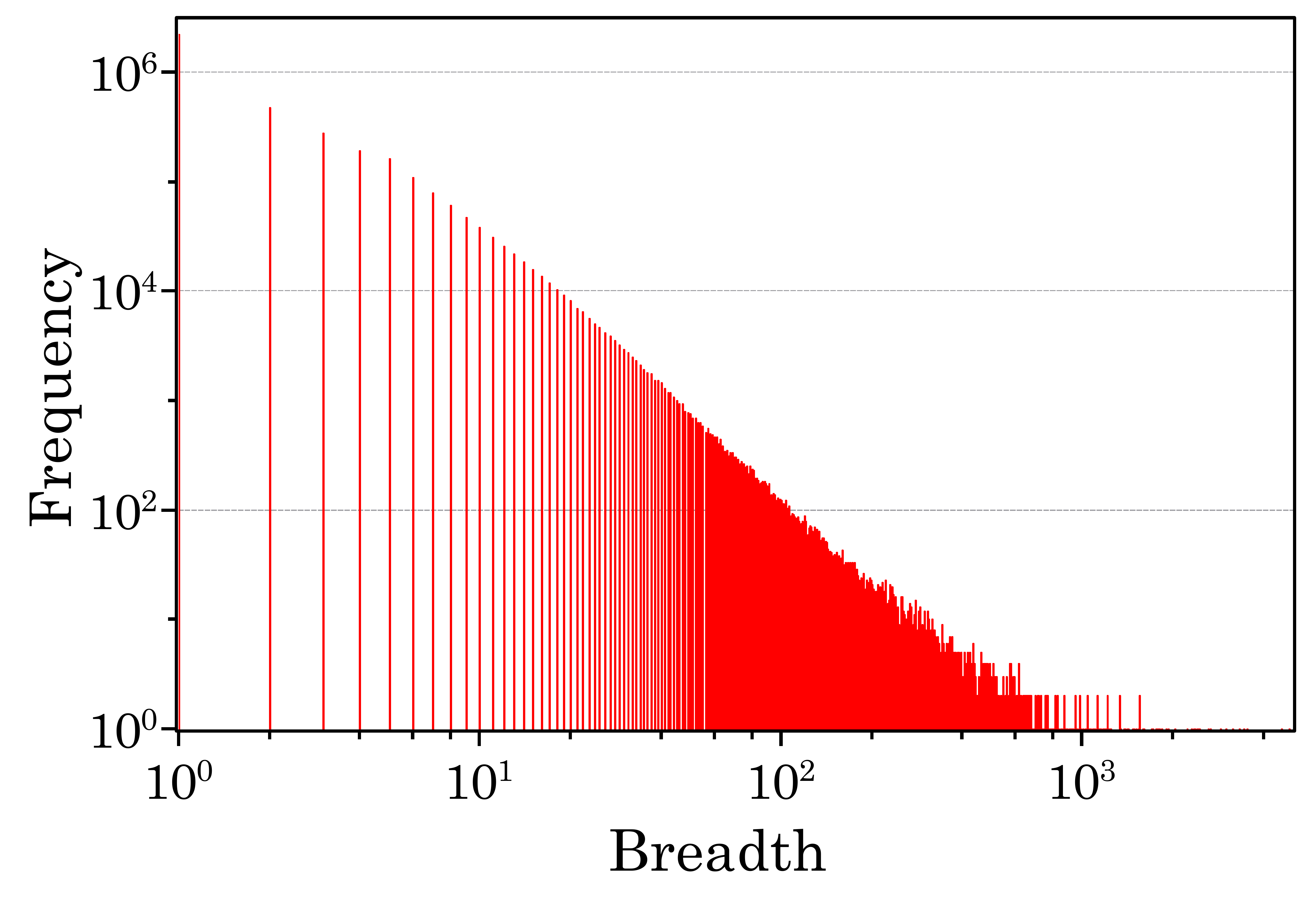}}\quad
	\caption{Frequency distributions for depth (\ref{fig:d-freq}) and breadth (\ref{fig:b-freq}) of IDTs of all the papers in the dataset. The x-axis in the plot for breadth is in logarithmic scale.}
	\label{fig:depth-breadth-freq}
\end{figure}

Figure~\ref{fig:depth-breadth} shows the distribution of breadth and depth with citations (Figures~\ref{fig:b-cit} and \ref{fig:d-cit}, respectively) and the correlation between depth and breadth (Figure~\ref{fig:d-b}). We observe that while breadth is strongly correlated with citation count ($\rho=0.90$), the correlation between depth and citation count is relatively weak ($\rho=0.50$). These observations indicate that increasing citation count often lead to the development of new branches in the IDT of the paper rather than increasing the depth. This happens because most citations to a paper use the cited paper as a background reference (thus gets added to the IDT as a new branch), rather than extending a body of work represented by an already formed branch (increasing the depth). Further, note from Figure~\ref{fig:d-b} that the variation in breadth values reduces with increasing depth. Especially for IDTs with depth greater than $30$, the values of breadth lie in a relatively narrow band (almost all IDTs with depth greater than 30 have breadth less than 300). This is indicative of highly influential papers that have spawned multiple directions of follow-up works and incremental citations correspond to continuation of these independent directions (thus increasing depth).

\begin{figure*}[ht]
    \centering
	\subfloat[Breadth vs. Citations\label{fig:b-cit}]{\includegraphics[width=0.32\textwidth]{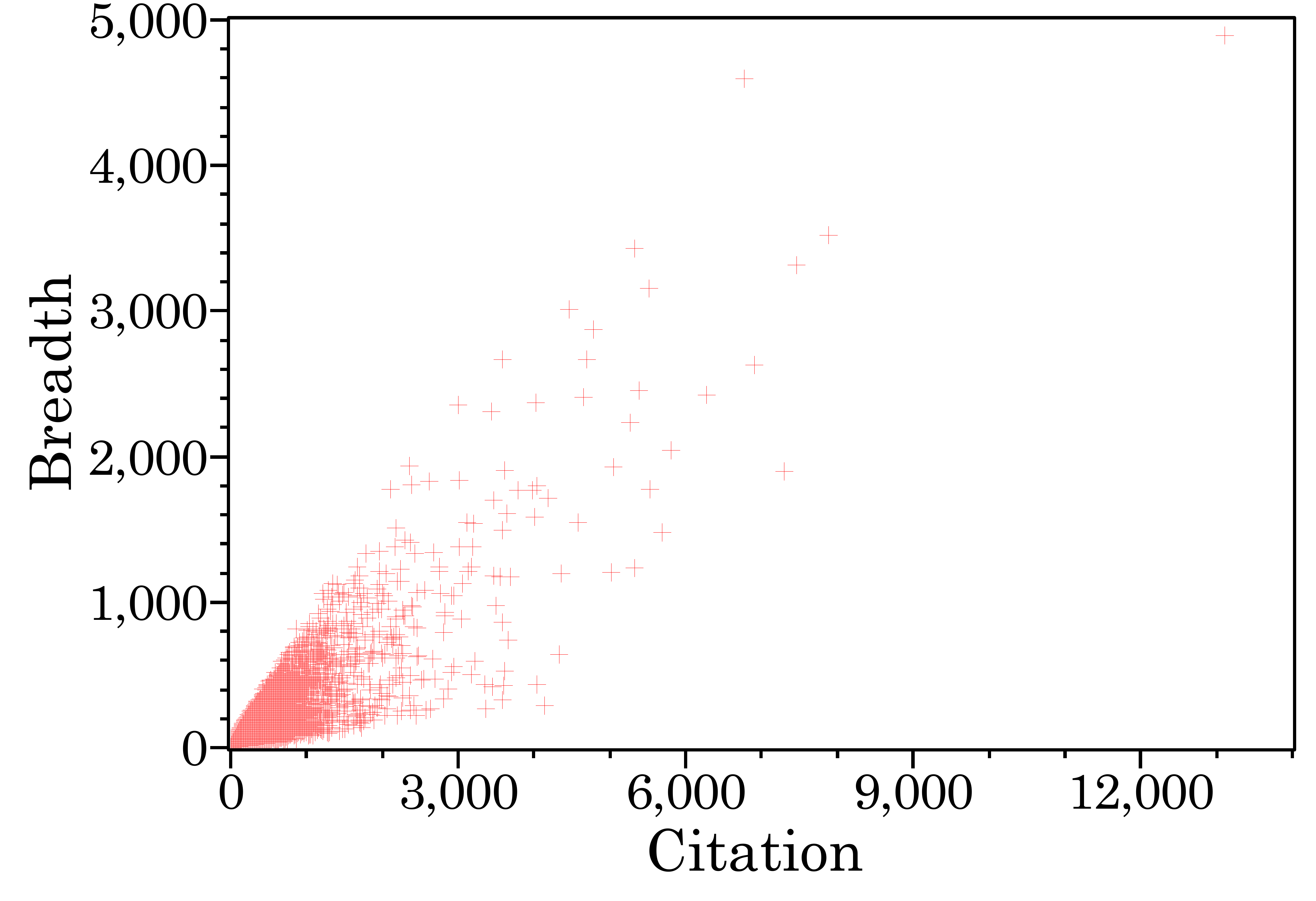}}\quad
	\subfloat[Depth vs. Citations\label{fig:d-cit}]{\includegraphics[width=0.32\textwidth]{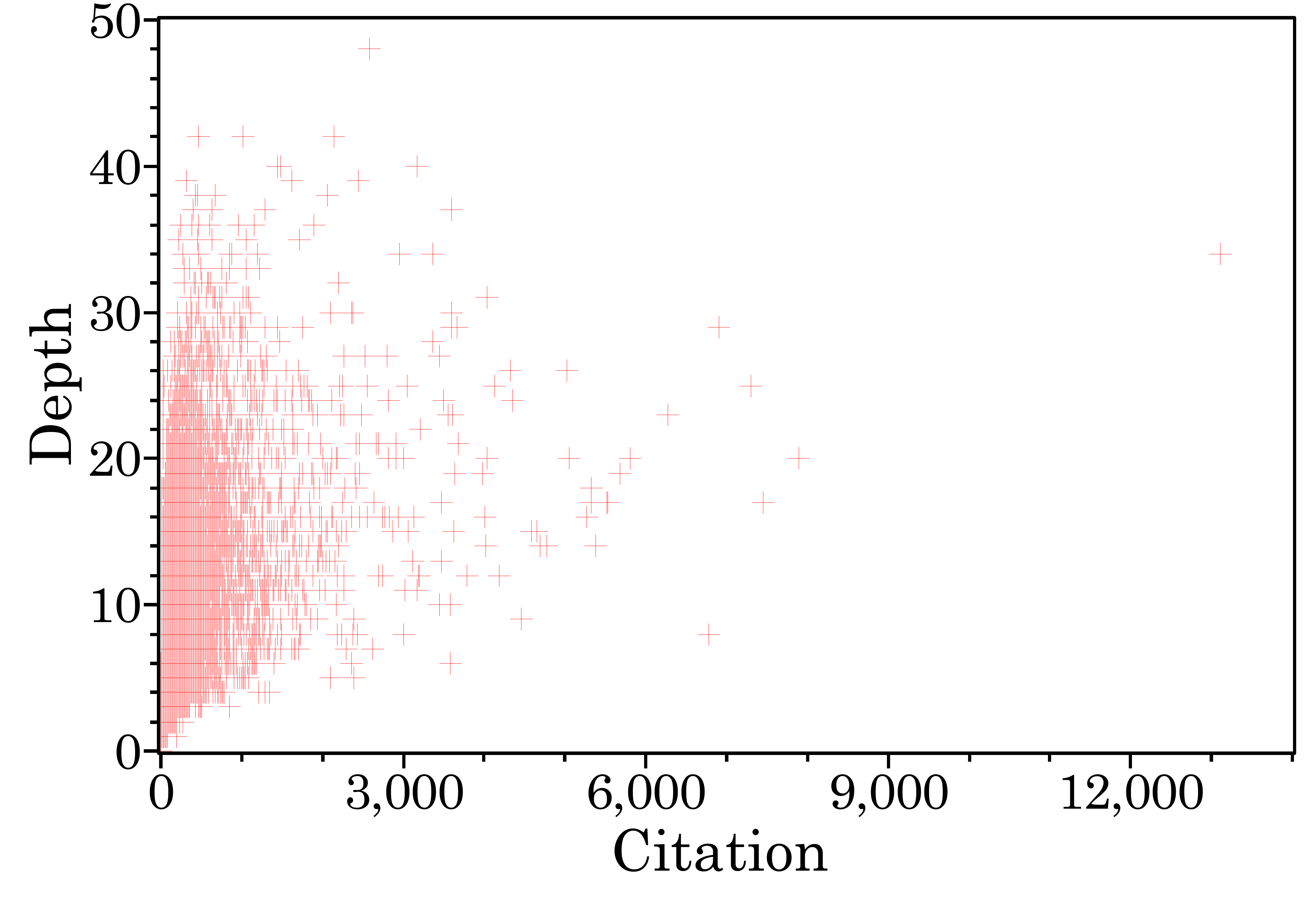}}\quad
	\subfloat[Depth vs. Breadth\label{fig:d-b}]{\includegraphics[width=0.32\textwidth]{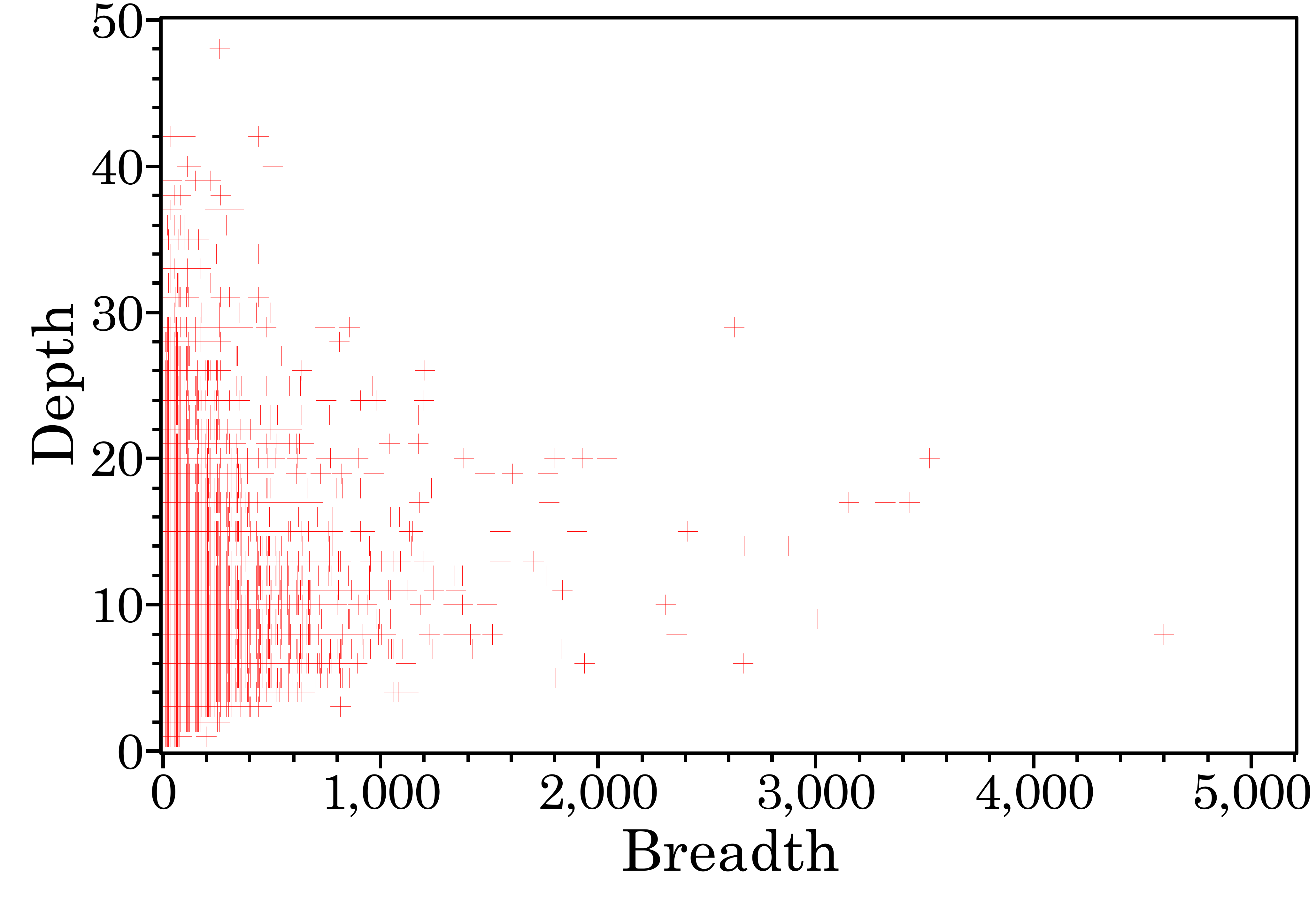}}\quad
	\caption{Scatter plots showing variations of breadth with citations (a), depth with citations (b), and correlation between depth and breadth (c).}
	\label{fig:depth-breadth}
	\vspace{-5mm}
\end{figure*}

\subsection{IDI and NID vs. Citations}
\label{sec:nid-cit}
We now study how the  IDI and NID values vary with the citation counts across multiple papers. Figure~\ref{fig:ideal_nice_dist} shows the scatter plot of IDI and NID values with citations for all the papers in the dataset. We observe that IDI values in general increase with the number of citations of a paper. This is along expected lines as the IDI for a paper is bounded by the number of citations of the paper (Equation~\ref{eq:idi-bounds}). A more interesting observation can be made from the plot for NID values (Figure~\ref{nid-cit}) where we see that in general, the value of  NID decreases with increasing citations -- papers having a high number of citations tend to have very low values of NID. Recall that for a given paper, NID captures how \emph{different} or \textit{far way} the IDI of the given paper is from its corresponding ideal IDT. Thus, highly influential papers tend to have their IDTs close to their ideal IDT configurations (as illustrated by the low NID value). This empirical observation strengthens our hypothesis that \emph{highly influential papers will, in general, lead to considerable amount of followup work (high depth) in multiple directions (high breadth)}.

\begin{figure}[!t]
\centering
    \subfloat[IDI vs. Citations\label{idi-cit}]{\includegraphics[width=0.45\columnwidth]{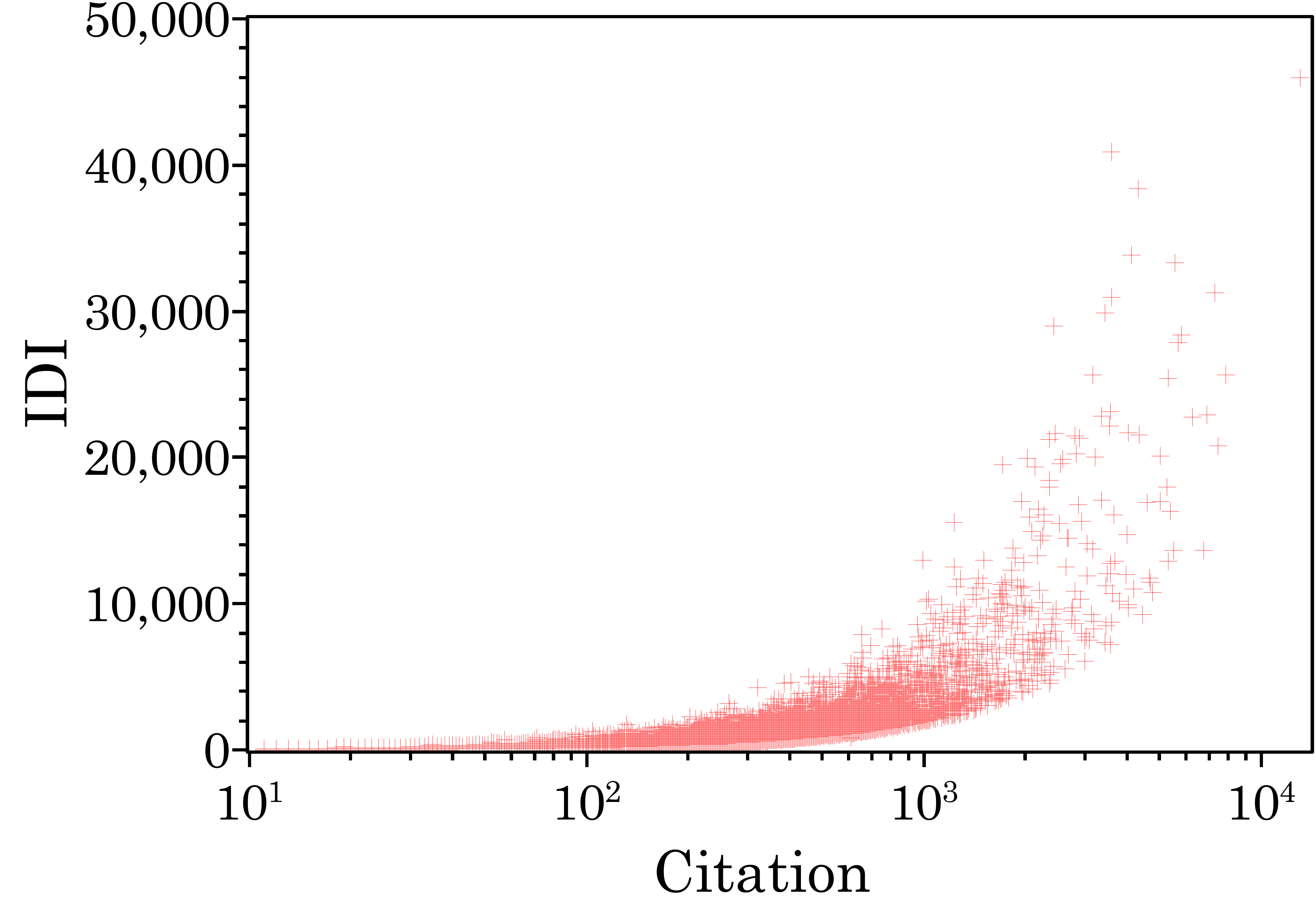}}\quad
	\subfloat[NID vs. Citations\label{nid-cit}]{\includegraphics[width=0.45\columnwidth]{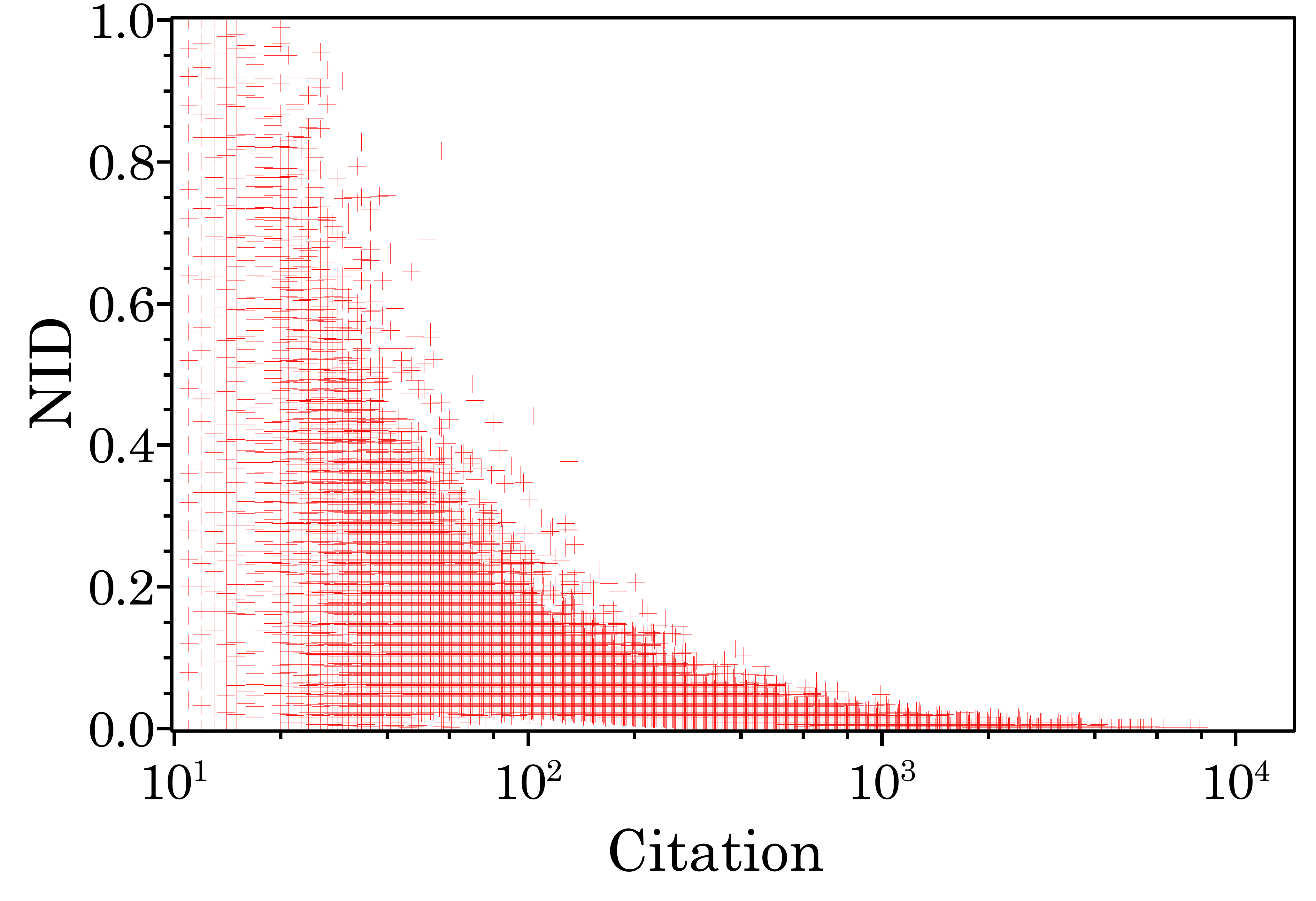}}\quad
    \caption{Scatter plots showing variations of (a) IDI and (b) NID values with citation counts.}
    \label{fig:ideal_nice_dist}
   \end{figure}

\section{NID as an Indicator of Influence}\label{sec:indicator}

As discussed before, we hypothesize that the highly influential papers produce IDTs which would be close to their corresponding ideal configurations. In Section~\ref{sec:nid-cit}, we found that highly-cited papers have very low NID values. Here we ask a complementary question -- \emph{Is low IDI value of a given paper an indicator of its future influence?} In other words, does a paper having its IDT close to the ideal configuration at a given time will be an influential paper in near future? We design two experiments to answer the above question. In Section~\ref{sec:future_citation_prediction}, we study if NID can predict how many citations a paper will get in future. In Section~\ref{sec:tot_papers}, we study if IDI measure can identify highly influential papers -- specifically, papers that have been judged highly influential by the community and have been awarded Test of Time (ToT) awards\footnote{Many conferences and journals award `Test of Time' or `10 year influential paper award' to papers that have had a high impact on their respective fields. These papers are generally selected by a committee of senior researchers.}.

\subsection{Future Citation Prediction through NID}
\label{sec:future_citation_prediction}
Let $P_v$ be the set of papers published in a publication venue $v$ (a conference or a journal). Let  $y_v$ be the year of organization of $v$. Over the next $t$ years, papers in $P_v$ will influence the follow up work and will gather citations accordingly. Let $I(p)$ be an influence measure under consideration. Let $R(v,t,I)$ be the ranked list of papers in $P_v$ ordered by the value of $I(.)$ at $t$. Thus, the top ranked paper in $R(v,t,I)$ is considered to have maximum influence at $t$. If $I(.)$ is able to capture the impact correctly, we expect the papers with high influence scores to have more incremental citations in future compared to papers having low influence scores. Let $C(v,t_1,t_2)$ be the ranked list of papers in $P_v$ ordered by the increase in citations from time $t_1$ to $t_2$. Thus, the papers that received highest fractional increase in citations in the time period $(t_1,t_2)$ will be ranked at the top. Note that we chose fractional increase in citation count rather than absolute count to account for papers that are early risers and receive most of their lifetime citations in first few years after publication \cite{chakraborty2015categorization}. Also, we consider only those papers published in a venue ($v$ here) rather than all the papers in our dataset to nullify the effect of diverse citation dynamics across fields and venues \cite{chakraborty2018universal}. 

Intuitively, if $I(.)$ is a good predictor of a paper's influence, the ranked lists $R(v,t_1,I)$ and $C(v,t_1,t_2)$ should be very similar -- influential papers at time $t_1$ should receive more incremental citations from $t_1$ to $t_2$. Thus, the similarity of the two ranked list could be used as a measure to evaluate the potential of $I(.)$ to be able to capture the influence of papers. We use the Kendall Tau rank distance $\mathcal{K}$ defined below to measure the similarity of the two ranked lists $R(v,t_1,I)$ and $C(v,t_1,t_2)$ as follows.\\

\begin{equation}
    z(v,I) = \mathcal{K}(R(v,t_1,I),C(v,t_1,t_2))
\end{equation}

A lower value of the $z$ score indicates that the two ranked lists are highly similar, that in turn shows that $I(.)$ has high predictive power in forecasting the future incremental citations. We use this framework to evaluate the potential of NID (as a replacement $I(.)$ in this case) as an early predictor of future incremental citations of a paper. We use the number of citations of a paper as a competitor of NID as it is the most common and simplest way of judging the influence of a paper \cite{garfield1972citation,garfield1964science}. First, we group all the papers in our dataset by their venues and compute the values of the influence metrics  (NID and citation count) after five years following the publication year (i.e., $t_1 = 5$). A venue is uniquely defined by the year of publication and the conference/journal series. For example, JCDL 2000 and JCDL 2001 are considered as two separate venues.  We next compute the incremental citations gathered by the papers ten years after the publication ($t_2 = 10$). Note that we only consider venues with the publication year in the range $1995$ and $2000$ because we needed citation information $10$ years after publication (i.e., up to 2010). The coverage of papers published after year $2010$ is relatively sparse in our dataset \cite{chakraborty2018universal}. This filtering resulted in $1,219$ unique venues and $30,556$ papers in total.

With the group of papers published together in a venue and their citation information available, we compute the following three ranked lists:
\begin{enumerate}
    \item $R_{v,c} = R(v,5,c)$; the ranked lists of papers in venue $v$ ordered by their citation counts five years after the publication.
    \item $R_{v,nid} = R(v,5,nid)$; the ranked lists of papers in venue $v$ ordered by their NID scores five years after the publication.
    \item $C_v = C(v,5,10)$; the ranked lists of papers in venue $v$ ordered by the normalized incremental citations received beginning of $5^{\text{th}}$ years after the publication till $10^{\text{th}}$ years after publication.
\end{enumerate}

For each venue $v$, these lists can be used to compute $z(v,NID)$ and $z(v,c)$ -- i.e., the $z$ scores with NID and citation count as influence measures, respectively. For the $1,219$ venues identified as above, the average value of $z$ score using citations and IDI as the influence measure is found to be $0.5125$ and $0.3703$. Thus, on an average, we find that the $Z$ score is lower when using NID as the influence measure compared to that with citation count. In other words, more papers identified as influential by NID received more incremental future citations compared to the papers identified as influential by citation count.

Figure~\ref{fig:z_diff} provides a fine-grained illustration of the difference of $z$ scores achieved by the two influence measures for each of the 1,219 venues. For each venue, we compute the difference of $z$ scores achieved by NID and citation count. We note that for most of the venues, the $z$-score achieved by NID is lower than the $z$-score achieved by the citation count (positive bars). These observations indicate that when compared with raw citation count, NID is a much stronger predictor of the future impact of a scientific paper. As opposed to the raw citation count, the IDT of a paper provides a fine-grained view of the impact of the paper in terms of its depth and breadth as succinctly captured by the IDT of the paper. These results provide compelling evidence for the utility of IDT (and the consequent measures such as IDI and NDI derived from it) for studying the impact of scholarly papers.

\begin{figure}
    \centering
    \includegraphics[width=1\columnwidth]{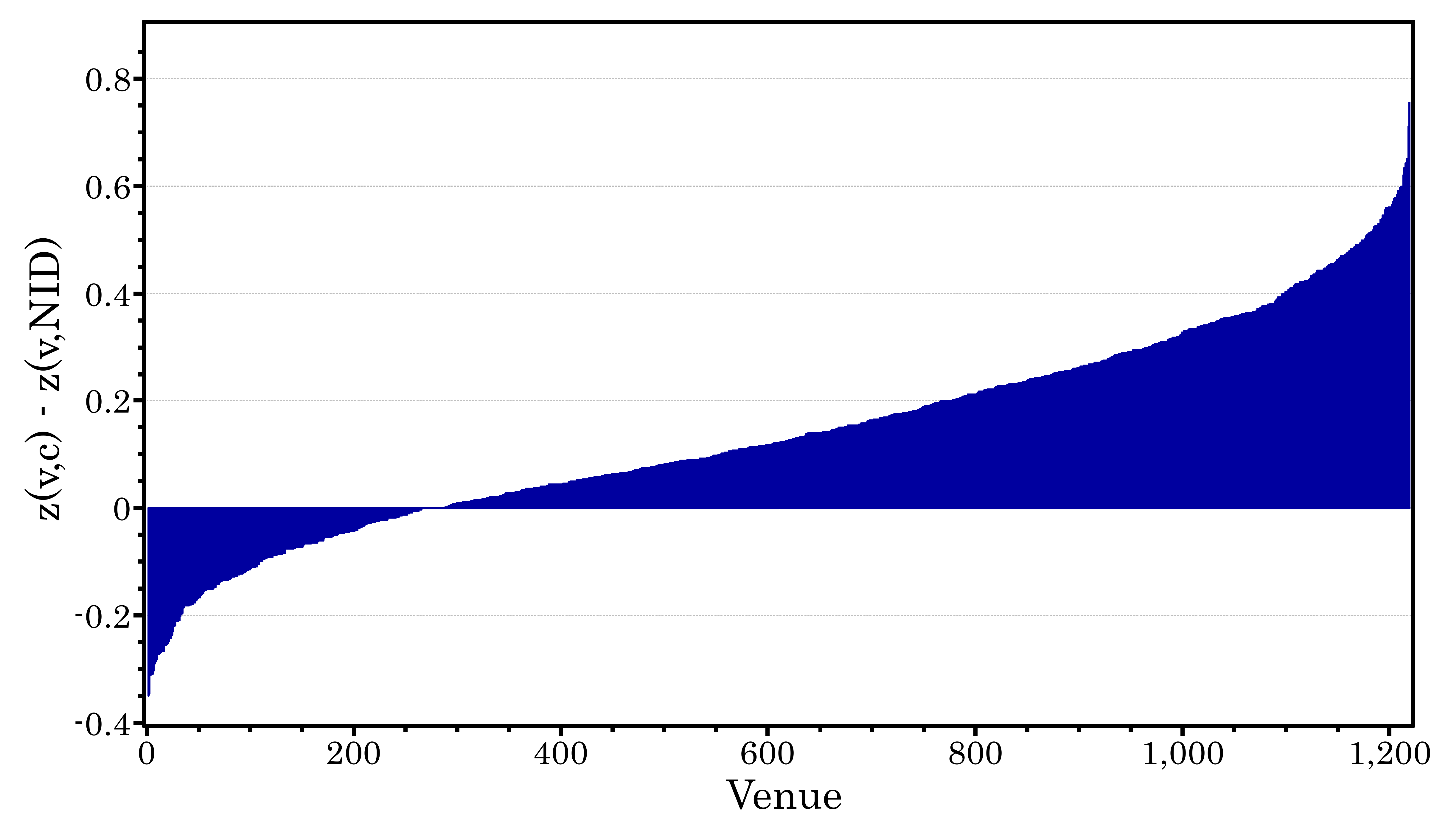}
    \caption{z-scores for venues. Papers in a venue are ranked using NID, number of citations and relative gain in citations. The horizontal axis represents venues ordered by the difference in two z-scores.}
    \label{fig:z_diff}
    
\end{figure}

\subsection{Identifying Test of Time Winners}
\label{sec:tot_papers}

Many conferences recognize highly influential papers that have had a long-lasting impact on the respective field of research. These recognition are awarded in the form of Test of Time (ToT) awards, 10 year Influential Paper Awards, etc. We  manually collected a set of papers that have received the ToT awards by their respective publication venues and obtained a list of 40 such papers (published in conferences like SIGIR, AAAI, ICCV etc.) that are also present in our dataset. 

Let $P$ be a ToT awardee paper that was published in year $y$ at venue $v$. We extracted all the papers from our dataset that were published at venue $v$ in year $y$. We then ordered these papers by their citation count at time $y+10$ (i.e., 10 years after publication) and selected top $5\%$ highest-cited papers (including $P$). We consider these papers to be the major competitor of $P$ to win the TOT award since highly influential papers are expected to achieve a high number of citations\footnote{Many conferences (e.g., SIGIR) nominate top five most cited papers published in a year for the ToT award, in addition to getting nominations from the community.}. We then compute the rank of $P$, denoted by $Rank(P,Cite)$ in this set. Similarly, we compute NID at time $y+10$ for these highly-cited papers and rank them by NID to compute the rank of $P$, denoted by $Rank(P,NID)$. If NID is a better measure of the paper's impact, then we expect $P$ to have a better rank ($1$ being the best outcome, i.e., the top paper) compared to the other papers in the compared set. Figure~\ref{fig:tot_ranking_plots} plots $Rank(P, Cite)$ and $Rank(P, NID$) for each TOT awardee paper $P$. We note that in most of the cases (25 out of 40), the ToT papers are the top-ranked papers by both citation count and NID. 

Interestingly, we also note that in 12 out of 40 cases, the ranks of the ToT awardee papers achieved by NID are lower (better) than the ranks achieved by citation counts. Thus, \emph{the papers judged most influential by the community (by giving TOT award) may not always have the highest citations among all their contemporary papers}. There may be some subjective evaluation criteria that capture the influence a paper has had on the field. The results of this experiment indicate that NID is much better at capturing the influence of a paper -- 33 out of 40 times, the ToT paper achieves rank $1$ when ranked by NID. The overall Mean Reciprocal Rank (MRR) achieved by NID is $0.8771$ compared to an MRR of $0.7712$ achieved by the citation count. Thus, we can consider NID as a much better surrogate measure of influence for a scientific article.

\begin{figure}
    \centering
        \includegraphics[width=0.9\columnwidth]{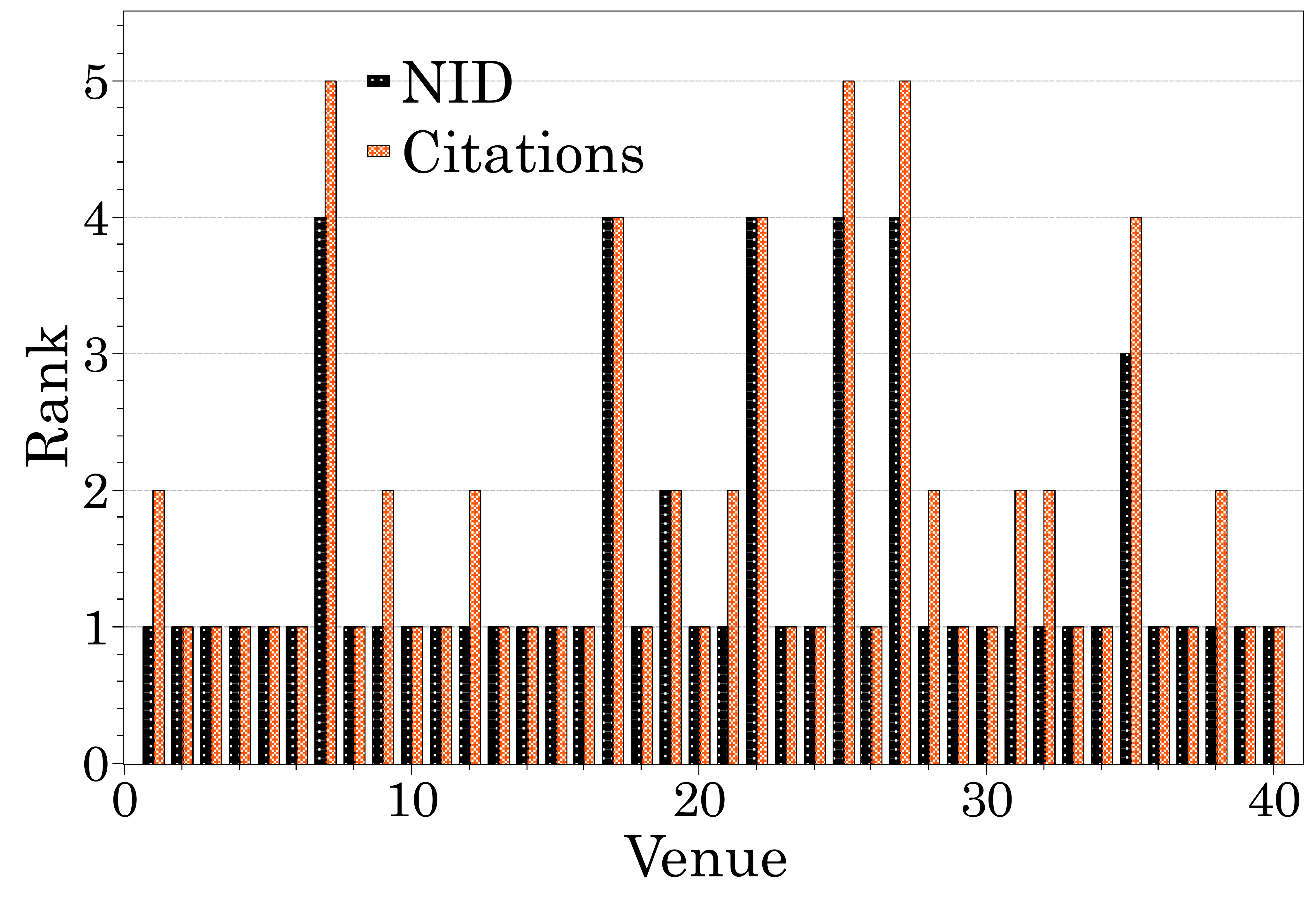}
    \caption{Absolute ranks (based on citation count and NID)  of the ToT papers among their contemporaries. }
    \label{fig:tot_ranking_plots}
\end{figure}

\section{Conclusion}
\label{sec:conclusion}

This paper proposed a novel concept, called `Influence Dispersion Tree' (IDT)  to explore and model the structural information among the followup (citing) papers of a given paper linked through citations. We derive several basic and advanced properties of an IDT to understand their relations with the raw citation count. One striking observation is that with the increase in citation count, the depth of an IDT grows much slower than the breadth. However, as the citation count grows, the IDT of a paper moves closer to its ideal IDT configuration.  
We further proposed a series of metrics to quantify the notion of influence from IDT. Our proposed metric NID turned out to be superior to the raw citation count -- (i) to predict how many new citations a paper is going to receive within a certain time window after publication, (ii) to identify and explain why a paper is recognized by its research community (through various prestigious awards such as Test of Time awards) as highly influential among its contemporaries. 

The conclusion we would like to draw from this paper is -- to understand the contribution of a source paper to its own research field, along with the total number of followup papers of a source paper (i.e., citation count), one should also consider how these followup papers are organized among themselves through citations. A paper can be treated as highly influential only when it has enriched a field equally in both vertical (deepening the knowledge further inside the field) and horizontal (allowing the emergence of new sub-fields) directions. 

\section*{Acknowledgement}
Part of the research was supported by   the Ramanujan Fellowship, Early Career Research Award (SERB, DST), and the Infosys Centre for AI at IIITD.

\balance
\bibliographystyle{ACM-Reference-Format}
\bibliography{refs}

\end{document}